\documentclass[conference]{IEEEtran}
\IEEEoverridecommandlockouts

\AtBeginDocument{%
  \providecommand\BibTeX{{%
    Bib\TeX}}}

\usepackage{tikz}

\newcommand\copyrighttext{%
  \footnotesize \textcopyright 2026 IEEE.  Personal use of this material is permitted.  Permission from IEEE must be obtained for all other uses, in any current or future media, including reprinting/republishing this material for advertising or promotional purposes, creating new collective works, for resale or redistribution to servers or lists, or reuse of any copyrighted component of this work in other works.}

\newcommand\copyrightnotice{%
\begin{tikzpicture}[remember picture,overlay]
\node[anchor=south,yshift=20pt] at (current page.south) {\parbox{\textwidth}{ \copyrighttext}};
\end{tikzpicture}%
}

\usepackage{amsmath,amsfonts,amsthm}
\usepackage{subcaption}
\usepackage{graphicx}
\usepackage{textcomp}
\usepackage{xcolor}
\usepackage{twoopt}
\usepackage{enumitem}
\usepackage{algorithm}
\usepackage{algorithmic}
\usepackage{amsmath}
\usepackage{mathtools}
\mathtoolsset{showonlyrefs}

\usepackage{hyperref}
\def\BibTeX{{\rm B\kern-.05em{\sc i\kern-.025em b}\kern-.08em
    T\kern-.1667em\lower.7ex\hbox{E}\kern-.125emX}}

\usepackage{bm,bbm}
\usepackage{xcolor}
\newtheorem{theorem}{Theorem}

\newtheorem{corollary}{Corollary}
\newtheorem{remark}{Remark}
\allowdisplaybreaks

\newcommand{\E}{\mathbbm{E}}
\newcommand{\p}{\mathbbm{P}}
\newcommand{\R}{\mathbbm{R}}

\makeatletter
\newcommand{\linebreakand}{%
  \end{@IEEEauthorhalign}
  \hfill\mbox{}\par
  \mbox{}\hfill\begin{@IEEEauthorhalign}
}
\makeatother

\begin{document}

\title{Opportunistic Scheduling for Optimal Spot Instance Savings in the Cloud\thanks{*while author was working at Nokia Bell Labs, Murray Hill, NJ, USA.}}

\author{
\IEEEauthorblockN{Neelkamal Bhuyan*}
\IEEEauthorblockA{\textit{Georgia Institute of Technology} \\
Atlanta, GA, USA \\
nbhuyan3@gatech.edu}
\and
\IEEEauthorblockN{Randeep Bhatia}
\IEEEauthorblockA{\textit{Nokia Bell Labs}\\
Murray Hill, NJ, USA \\
randeep.bhatia@nokia-bell-labs.com}
\linebreakand
\IEEEauthorblockN{Murali Kodialam}
\IEEEauthorblockA{\textit{Nokia Bell Labs}\\
Murray Hill, NJ, USA \\
murali.kodialam@nokia-bell-labs.com}
\and
\IEEEauthorblockN{TV Lakshman}
\IEEEauthorblockA{\textit{Nokia Bell Labs}\\
Murray Hill, NJ, USA \\
tv.lakshman@nokia-bell-labs.com}
}

\maketitle
\copyrightnotice

\begin{abstract}
We study the problem of scheduling delay-sensitive jobs over spot and on-demand cloud instances to minimize average cost while meeting an average delay constraint. Jobs arrive as a general stochastic process, and incur different costs based on the instance type. This work provides the first analytical treatment of this problem using tools from queuing theory, stochastic processes, and optimization. We derive cost expressions for general policies, prove queue length one is optimal for low target delays, and characterize the optimal wait-time distribution. For high target delays,  we identify a knapsack structure and design a scheduling policy that exploits it. An adaptive algorithm is proposed to fully utilize the allowed delay, and empirical results confirm its near-optimality.
\end{abstract}

\maketitle

\section{Introduction}

Minimizing cloud operational costs is a growing concern, especially with the rapid adoption of Large Language Models (LLMs) and AI tools. Among the major contributors to cloud expenses are compute instances. While large enterprises can afford on-demand instances, smaller organizations rely on budget-friendly alternatives—chiefly spot instances—which are cheaper but less reliable in availability.

To offload unsold capacity, cloud providers introduced pricing models such as Amazon’s spot instance system~\cite{AWS_EC2},  Google  Preemptible VMs~\cite{GCP} and Oracle~\cite{oracle_spots}. These spot instances are significantly cheaper (by $3$--$10\times$~\cite{295489}) than on-demand ones, but come with the risk: of uncertain preemption, which is problematic for delay-sensitive edge applications \cite{11181549,bhuyan2022multi,bhuyan2022provable} or low-latency financial trading~\cite{numerai,mosaic}, where timely computation is critical.

Prior works~\cite{chetto2014optimal},~\cite{islam2020scheduling},~\cite{ghor2011real},~\cite{295489},~\cite{9640599},~\cite{taghavi2023cost} have not addressed this problem through a rigorous job-scheduling framework. \cite{295489} addresses this delay-constrained scheduling challenge only for a single job. Their \textit{Uniform Progress} policy ensures consistent job progress but neither addresses a stream of jobs nor establishes optimality guarantees. In contrast, we study the problem of \textit{optimally scheduling a continuous stream of delay-sensitive jobs} over spot and on-demand instances, aiming to minimize the average cost per job while ensuring the average delay does not exceed a threshold $\delta$. 

\begin{figure}[htb]
  \centering
  \includegraphics[width=0.8\linewidth]{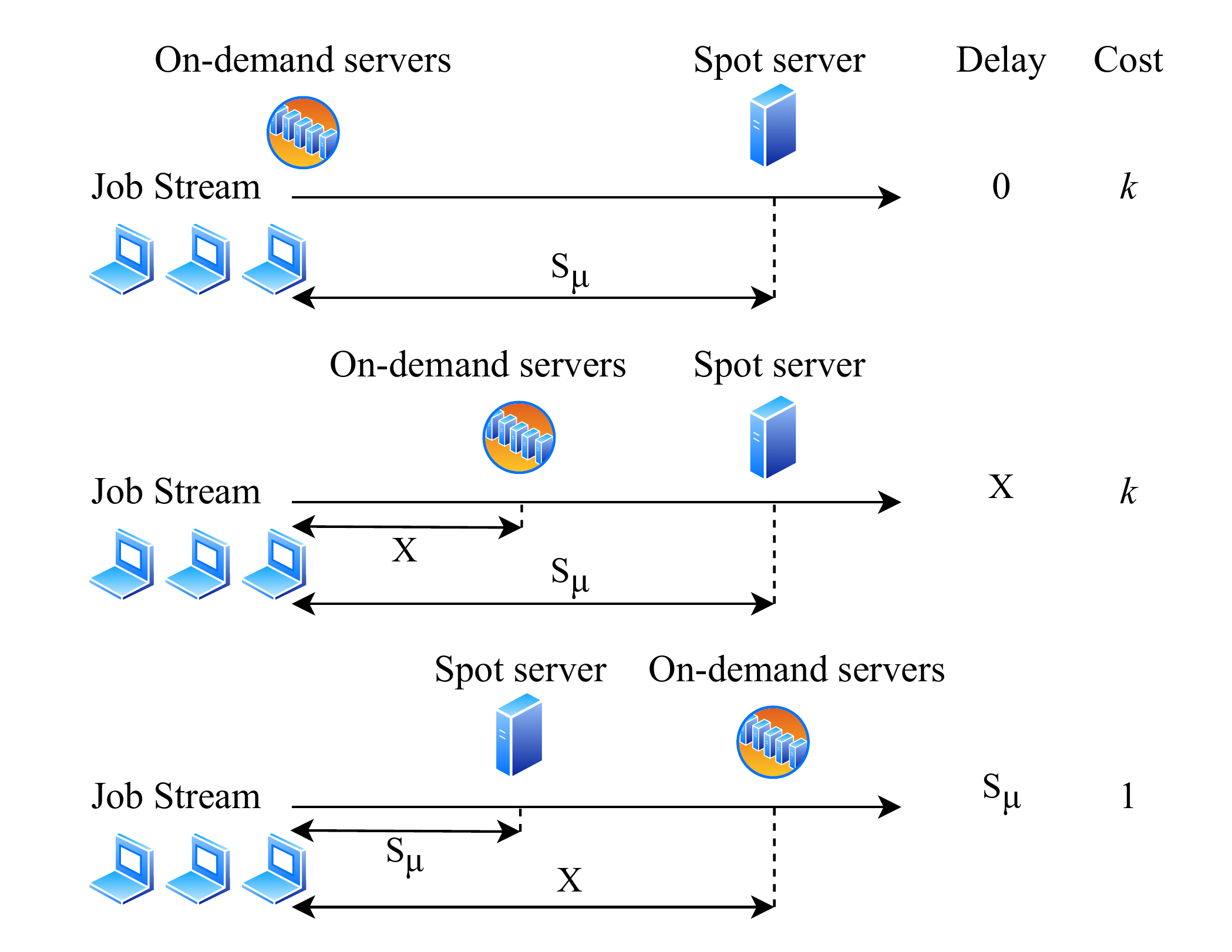}
  \caption{Job Arrivals and Departures: Delay and Costs}
  \label{spot_cases}
  \vspace{-5pt}
\end{figure}

\textbf{\textit{Main Contributions:}} In this work, we formalize the optimization plus scheduling problem from a distribution-agnostic $G/G/1$ queuing perspective, identifying two regimes: (i) tight delay constraint and, (ii) relaxed delay constraint. Closed-form tractability in the strong delay  regime allows us to establish an optimal queue length of \textit{one} (Theorem \ref{thm: gen_cost} and \ref{thm: opt_queue}) and formulate the optimal greedy wait-time for spot instances (Theorem \ref{thm: opt_prob}). 

Lack of closed-form solutions for a $G/G/1$ system make the relaxed delay regime significantly harder to analyze. We circumvent this by identifying a novel \textit{knapsack structure} in the greedy heuristic (Theorem \ref{thm: gen_sys_greedy}), that makes it near-optimal. It leads to a 3-phase structure, involving a fractional admission policy \eqref{eqn: r_formula} that needs to be learned.

Inspired from recent advances in learning-augmented online control \cite{bhuyan2025estimate,9483309,bhuyan2026scale}, we design an Adaptive Admission Control Policy (Algorithm \ref{alg: adaptive_adm_ctrl}) that learns the optimal fractional admission while scheduling jobs on spot instances. Our experiments under various spot availability patterns (Section \ref{section:experiments}) show fast convergence to the optimal policy (Fig. \ref{fig: gcp_low_delta} - \ref{fig: mm1 large delta}).

\vspace{-10pt}

\section{Model and Preliminaries}
\subsection{Analytical model}
We consider the problem of scheduling an incoming \textit{stream} of jobs either on a spot instance or an on-demand instance where the cost of using a spot server is $1$ and that of using an on-demand server is $k$. Throughout, we use the terms ``instance'' and ``server'' interchangeably. According to traces generated in \cite{295489,azure_spots}, $k$ usually lies in the range $3$ to $10$ for GPUs and CPUs. Figure \ref{spot_cases} below illustrates the general system. 

\begin{table}
    \centering
    \begin{tabular}{|c|c|}
    \hline
    \textbf{Notation}     & \textbf{Entity}\\
    \hline
    $k$          & on-demand instance's cost per job\\
    $\delta$ & maximum allowed expected delay\\
    $C$ & cost per job \\
    $N_A(t)$     & number of job arrivals in time $t$\\
    $N_{S_\mu}(t)$     & number of spot instance arrivals in time $t$\\
    $A$     & job inter-arrival time (IID)\\
    $S_\mu$     & spot inter-arrival time (IID)\\
    $\lambda$ & $1/\E[A]$\\
    $\mu$ & $1/\E[S_\mu]$ \\
    $X$     & a job's maximal wait time (IID)\\
    $T$     & time an incoming job spends in the system\\
    $N$     & maximum queue length, if enforced\\
    $\pi_i$     & steady state probability of queue length being $i$\\
    $\mathcal{L}\{h\}(s)$ & Laplace transform of $h(t)$ evaluated at $s$\\
    \hline
    \end{tabular}
    \caption{Notations}
    \label{tab: notations}
\end{table}

In related domains, such as scheduling \cite{Koutsopoulos2011,11038902}, caching \cite{chae2016caching,sadeghi2018optimal,madnaik2022renting} and online control \cite{9483309,bhuyan2024best}, stochastic modeling has been used to capture availability dynamics. We, therefore, model the sequence of incoming jobs $N_A(t)$ as a general stochastic renewal process, where the inter-arrival times of the jobs $\{A_n\}_{n=1}^\infty$ are independent and identically distributed.

The service mechanism is as follows. The spot instances' presence is modeled as an stochastic arrival process $N_{S_\mu}(t)$, with inter-arrival times between spot arrivals, $S_{\mu}^n$, being IID (and occasionally written without $n$ in the superscript). When a spot instance arrives into the system and finds a job waiting, it picks the first one in the queue for spot-service (FCFS) and serves it. A job can be processed any time by an on-demand instance, albeit at a higher cost. In our setting, we model the use of these in the form of jobs leaving the queue without service from spot instance. From the job's perspective, if it is assigned to any instance (spot or on-demand), it is considered finished for the dispatch algorithm. Existence of such spot instances, where job ``interruption ratio'' is low, is common knowledge \cite{AWS_EC2,azure_spots,10.1145/3589334.3645548}.

The choice between waiting for a spot instance and leaving for on-demand service arises from jobs being delay-sensitive. We model this as the job's \textit{average waiting time} in the system (both on-demand and spot) being at most $\delta >0$. On-demand instances have zero wait-time while spot instances can have possibly infinite wait-time. On the other hand, spot instances are significantly cheaper than on-demand.

So, lets consider the options a job has when it arrives. The first choice is whether to join the queue for a spot instance at all. The probability of this event is denoted as $q_n$, where $n$ is the number of jobs in the system. Once, it joins the queue, it has to make a choice between waiting in the system or leaving for on-demand service.  It decides a maximal waiting time for service, modeled as the random variable $X_n \geq 0$ (IID, and occasionally written without the subscript). The job will either get serviced by a spot instance before $X_n$ time expires (after its arrival) or choose an on-demand server.

We study the above described system in the steady state. We denote the cost per job with the random variable $C$ and the wait-time experienced by any job as $T$. We denote the class of scheduling policies by $\mathcal{P} := \left\{\left(\{q_n\}_{n=0}^\infty,\{X_n\}_{n=1}^\infty\right) : q_n \in [0,1], X_n \geq0 \right\}$. The goal of the scheduler is to solve the following optimization problem:
\begin{align}
    \min_{\substack{\left(\{q_n\}_{n=0}^\infty,\{X_n\}_{n=1}^\infty\right) \in \mathcal{P}\\ \E[T] \leq \delta}} \E[C]
\end{align}
where $\E[\cdot]$ denotes expectation with respect to the randomness in the system. The system is referred using Kendall's notation for queuing, which is written as $\cdot / \cdot / \cdot + (\cdot)$. 

\subsection{Cost per Job for a General System}
We start with a general system where jobs and spot instances arrive as renewal processes.
\begin{theorem}\label{thm: gen_cost}
Consider a general job arrival process and a general spot arrival process $(G/G/1)$. The cost per job for any scheduling policy in steady state is
\begin{align*}
    \E[C] = k - (k-1)\cdot \frac{\E[A]}{\E[S_\mu]} (1-\pi_0)
\end{align*}
where $\E[A]: = \frac{1}{\lambda}$ is the average job inter-arrival time, $\E[S_\mu]: = \frac{1}{\mu}$ is the average spot inter-arrival time and $\pi_0$ is the steady state probability of the queue being empty.
\end{theorem}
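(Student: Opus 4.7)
My plan is to reduce the claim to computing the steady-state fraction of jobs served by spot instances, and then to evaluate that fraction via a rate-balance argument on the spot arrival process. Since every job incurs cost $1$ if served by spot and cost $k$ if served by on-demand, letting $p_s$ denote the long-run fraction of jobs served by spot I can write
\begin{equation}
    \E[C] = p_s + k(1-p_s) = k - (k-1)\, p_s,
\end{equation}
so the theorem is equivalent to showing that $p_s = (\E[A]/\E[S_\mu])(1-\pi_0) = (\mu/\lambda)(1-\pi_0)$.

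Next I would express $p_s$ as a ratio of long-run rates. Over a horizon $[0,t]$, the job-arrival renewal process satisfies $N_A(t)/t \to \lambda$ and the spot renewal process satisfies $N_{S_\mu}(t)/t \to \mu$ almost surely. A spot arrival serves exactly one job iff it finds the queue non-empty, so the cumulative number of spot services by time $t$ equals $\sum_{i=1}^{N_{S_\mu}(t)} \mathbbm{1}\{\text{queue non-empty at the $i$-th spot epoch}\}$. Dividing this by $N_A(t)$, factoring out $N_{S_\mu}(t)/N_A(t) \to \mu/\lambda$, and invoking the ergodic theorem for the stationary marked spot process yields
\begin{equation}
    p_s \;=\; \frac{\mu}{\lambda}\, \p[\text{queue non-empty at a spot epoch}] \;=\; \frac{\mu}{\lambda}\,(1-\pi_0),
\end{equation}
and substituting $\mu/\lambda = \E[A]/\E[S_\mu]$ gives the stated formula.

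The main obstacle I anticipate is the last identification: equating the arrival-averaged emptiness probability seen at spot epochs with the time-averaged steady-state probability $\pi_0$. If spot arrivals were Poisson this would follow immediately from PASTA, but for a general renewal spot process this is nontrivial because the queue length is not independent of the spot process (each spot epoch that finds a non-empty queue causes a departure). Resolving this requires either (i) adopting the convention that $\pi_0$ denotes the arrival-averaged probability at spot epochs, which the table's description of $\pi_i$ as the ``steady-state probability of the queue length being $i$'' is loose enough to accommodate, or (ii) invoking an ASTA-type lack-of-anticipation argument that exploits the independence of the spot renewal process and the job arrival process. I expect the paper to take route (i); once this identification is fixed, the remainder of the proof is bookkeeping, namely verifying stability so that the two SLLN limits hold and confirming that every spot-served job is counted exactly once in the indicator sum above.
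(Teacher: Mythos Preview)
Your approach is essentially identical to the paper's: both decompose the running cost as $k$ minus $(k-1)$ times the fraction of spot-served jobs, factor the latter as $(N_{S_\mu}(t)/N_A(t))$ times the empirical fraction of spot epochs finding a non-empty queue, and pass to the limit using renewal-type convergence. The paper makes the interchange of limit and expectation explicit via dominated convergence (using $C(t)\in[1,k]$), writes the remaining term as a Ces\`aro average $\frac{1}{N}\sum_{n=1}^N(1-p_0^{(n)})$ of transient emptiness probabilities at spot epochs, and then simply declares this limit to be $1-\pi_0$; in other words, it takes exactly your route~(i) and does not address the ASTA issue you flagged.
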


The above result highlights the key driving force int the job scheduling problem in dual-processor systems (spot and on-demand): the sole dependence on the long term probability of the queue being empty., which is in fact intuitive. From the spot instance's perspective, it saves a job from incurring a cost of $k$, and that can only happen if there are jobs in the system, the probability of which is $(1-\pi_0)$. What is interesting is this holds under virtually no assumptions about the system. Below, we detail the proof of Theorem \ref{thm: gen_cost}.
\vspace{-5pt}
\begin{proof}
Let $Q_n$ be the event where the $n^{th}$ spot instance serves a job. We look at the running cost per job,
\begin{align}
    C(t) &=  \frac{k\cdot N_A(t) - (k-1)\cdot \sum_{n=1}^{N_{S_\mu}(t)} \mathbbm{1}(Q_n)}{N_A(t)}\\
    &= k - (k-1)\frac{t}{N_A(t)} \frac{N_{S_\mu}(t)}{t} \frac{\sum_{n=1}^{N_{S_\mu}(t)} \mathbbm{1}(Q_n)}{N_{S_\mu}(t)}
\end{align}
where $N_A(t)$ is the number of jobs arrived in the system and $N_{S_\mu}(t)$ is the number of spot instances arrived in the system. Now, remember that $C(t) \in [1,k]$, meaning $C(t)$ for a given $t$ is an integrable random variable. So,
\begin{align}
    \lim_{t \to \infty} C(t) =& \text{ } k - (k-1)\left( \lim_{t \to \infty} \frac{t}{N_A(t)}\right) \times \ldots\\
    & \ldots \times \left( \lim_{t \to \infty} \frac{N_{S_\mu}(t)}{t}  \right) \lim_{t \to \infty} \frac{\sum_{n=1}^{N_{S_\mu}(t)} \mathbbm{1}(Q_n)}{N_{S_\mu}(t)}\\
    =& \text{ } k - (k-1)\cdot \frac{\E[A]}{\E[S_\mu]} \lim_{N \to \infty} \frac{\sum_{n=1}^{N} \mathbbm{1}(Q_n)}{N}.
\end{align}
where we use the Renewal Theorem and the fact that (for an random variable) $Y \to x_0$ $a.s \implies$ $\frac{1}{Y} \to \frac{1}{x_0}$ $a.s$. Now, taking expectation on both sides
\begin{align}
\begin{split}
    \E\left[\lim_{t \to \infty} C(t) \right] =& k - (k-1) \frac{\E[A]}{\E[S_\mu]} \cdot \E\left[ \lim_{N \to \infty} \frac{\sum_{n=1}^{N} \mathbbm{1}(Q_n)}{N}  \right]
\end{split}
\end{align}
Since, $C(t) \leq k$ and $\frac{\sum_{n=1}^{N} \mathbbm{1}(Q_n)}{N} \leq 1$, dominated convergence theorem allows switching expectation and limit to get
\begin{align}
    \lim_{t \to \infty} \E[C(t)] &= k - (k-1)\cdot \frac{\E[A]}{\E[S_\mu]} \lim_{N \to \infty} \frac{\sum_{n=1}^{N} \p(Q_n)}{N}\\
    &= k - (k-1)\cdot \frac{\E[A]}{\E[S_\mu]} \lim_{N \to \infty} \frac{\sum_{n=1}^{N} \left(1-p^{(n)}_0 \right)}{N}\\
    &= k - (k-1)\cdot \frac{\E[A]}{\E[S_\mu]} (1-\pi_0)
\end{align}
where $p_0^{(n)}$ is the transient probability of the queue-length being zero and $\pi_0$ being the steady state probability.
\end{proof}



\section{Optimal scheduling under low target delay $\delta$}
In the previous subsection, we established the cost per job for any system. Now, we look at the optimal queue length to allow when $\delta$ is small.
\subsection{Optimal queue-length}
\begin{theorem}\label{thm: opt_queue}
	For $\delta \leq \frac{\p(A\leq S_\mu)}{\lambda}$, the optimal queue length is \textbf{one} and the average cost per job will satisfy
	\begin{align*}
		\E[C] = k - (k-1)\mu \delta
	\end{align*}
	for the appropriate maximal wait-time distribution $f_X(\cdot)$.
\end{theorem}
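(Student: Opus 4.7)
The plan is to reduce cost minimization to a purely queueing statement via Theorem~\ref{thm: gen_cost}, then close a matching lower bound and achievability pair using Little's law. From Theorem~\ref{thm: gen_cost}, $\E[C] = k - (k-1)(\mu/\lambda)(1-\pi_0)$, so minimizing $\E[C]$ subject to $\E[T]\leq\delta$ is equivalent to maximizing $1-\pi_0$ under the same delay constraint. This is the reformulation I would carry through the proof.

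For the lower bound I would apply Little's law to the queue to get $L = \lambda\,\E[T] \leq \lambda\delta$, where $L$ is the mean number of jobs in the system. Decomposing $L = \sum_{i\geq 0} i\,\pi_i \geq \sum_{i\geq 1}\pi_i = 1-\pi_0$ gives $1-\pi_0 \leq \lambda\delta$, which substituted into the cost identity yields $\E[C] \geq k - (k-1)\mu\delta$. The key structural observation is that the inequality $L \geq 1-\pi_0$ becomes an equality precisely when $\pi_i = 0$ for every $i\geq 2$, i.e.\ when the queue never holds more than one job. This simultaneously pins down the optimal queue length to one and forces $\E[T]=\delta$ in any optimal policy.

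For achievability I would restrict to the queue-length-one family ($q_0 = 1$, $q_n = 0$ for $n\geq 1$) and exhibit a maximal wait-time distribution with $\E[T] = \delta$. The two extremes of this family bracket the attainable delays: $X \equiv 0$ gives $\E[T] = 0$, while $X \equiv \infty$ lets an arriving job wait until the next spot while all subsequent arrivals are rejected. An alternating-renewal computation on cycles whose empty phase ends on a job arrival and whose occupied phase ends on the next spot — together with the race-probability identity $\p(A\leq S_\mu)$ — shows that this extreme realizes $\E[T] = \p(A\leq S_\mu)/\lambda$, so the hypothesis on $\delta$ places the target inside the attainable interval. An intermediate-value argument across a one-parameter sweep of $X$ (say, $X$ deterministic ranging over $[0,\infty]$) then delivers a distribution with $\E[T] = \delta$; the explicit optimizing form is deferred to Theorem~\ref{thm: opt_prob}.

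The main technical obstacle is the renewal computation identifying $\p(A\leq S_\mu)/\lambda$ as the correct upper threshold on the attainable delay under a queue-length-one policy. Under a general arrival process, the residual spot inter-arrival time observed at the instant a job finds the queue empty is not a stationary residual, so the joint renewal structure of $A$ and $S_\mu$ must be carried through the cycle; this is where the race probability $\p(A\leq S_\mu)$ naturally enters. Everything else in the argument is a direct consequence of Theorem~\ref{thm: gen_cost} and Little's law.
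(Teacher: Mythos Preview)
Your proposal is correct and follows essentially the same approach as the paper: reduce via Theorem~\ref{thm: gen_cost} to maximizing $1-\pi_0$, use Little's law together with the inequality $\sum_{n\ge1} n\,\pi_n \ge \sum_{n\ge1}\pi_n$ (with equality precisely when $\pi_n=0$ for all $n\ge 2$) to obtain $\E[C]\ge k-(k-1)\mu\delta$, and achieve the bound with a single-slot system whose extremal $X\equiv\infty$ case realizes $\E[T]=\p(A\le S_\mu)/\lambda$. The paper phrases the key inequality as a direct comparison between a single-slot and a multi-slot system at binding delay (its \eqref{eqn15}), whereas you phrase it as a universal lower bound plus achievability, but the content is identical; your identification of the alternating-renewal step behind $\pi_1=\p(A\le S_\mu)$ as the main technical obstacle is also on point.
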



Under tight average deadlines, this result establishes the optimal queue length for any dual-service cloud compute system. This is crucial for light tail distributions, where spot instances are expected to arrive quicker, but the tight deadline makes it sub-optimal for jobs further in the queue to suffer cascading delays, as illustrated by the proof.
\begin{proof}
    Consider a system where the queue-length is capped to one. Further, if a job joins the queue for spot-service, it waits indefinitely until a spot instance arrives. For such a system in steady-state, an incoming job will observe a queue-length of $1$ (the previous job in the queue) when no spot-instance has arrived during the job inter-arrival duration. This can only happen when the spot inter-arrival times are longer than the job inter-arrival times. Therefore, the probability of finding a queue-length of $1$ here is:
    \begin{align}
        \pi_1 = \p(A\leq S_\mu)
    \end{align}
    For such a system, the average wait-time is:
    \begin{align}
        \E[T^*] &= \frac{1}{\lambda}\left(0\cdot \pi_0 + 1\cdot\pi_1\right) = \frac{\p(A\leq S_\mu)}{\lambda}
    \end{align}
    Recall that we considered indefinite wait-time for the job that joins the queue. Consider the same single-slot system now with wait-time $X$ having distribution $F_X: [0,\infty)\to [0,1]$.
    For any $\delta \in \left[0, \frac{\p(A\leq S_\mu)}{\lambda} \right]$, there exists a wait-time $X$ with distribution $F_X$ such that the average wait-time of the corresponding single-slot system $\E[T] = \delta$.

    For any general system, the goal is to minimize the average cost per job in Theorem \ref{thm: gen_cost}, which amounts to maximizing $(1-\pi_0) = \sum_{n=1}^\infty \pi_n.$ This maximization has to be done under the \textit{binding} constraint $$\frac{1}{\lambda} \sum_{n=1}^\infty n \pi_n = \E[T] = \delta.$$ Now, for $\delta \in \left[0, \frac{\p(A\leq S_\mu)}{\lambda} \right]$, consider two scheduling policies: (a) single-slot (queue-length at most one) and, (b) allowing queue-length $N>1$. Optimizing over the wait-time distribution in both cases will involve utilizing the given constraint $\delta$ to the brim. This means:
    \begin{align}\label{eqn15}
        \frac{1}{\lambda}\sum_{n=1}^N \pi_n^{(b)} < \frac{1}{\lambda}\sum_{n=1}^N n \pi_n^{(b)} = \delta = \frac{1}{\lambda}\sum_{n=1}^1 n \pi_n^{(a)} = \frac{1}{\lambda}\sum_{n=1}^1 \pi_n^{(a)}
    \end{align}
    Therefore, the cost of system (a) will be smaller than cost of system (b), proving that single-slot queue is optimal for strong delay constraint, namely, $\delta \in \left[0, \frac{\p(A\leq S_\mu)}{\lambda} \right].$

Under this strong delay constraint setting, the optimal cost can be formulated by applying Theorem \ref{thm: gen_cost} and the fact that $(1-\pi_0^{(a)}) = \lambda \delta$ in \eqref{eqn15} to get
\begin{align}
    \E[C] = k - (k-1)\mu \delta.
\end{align}

\end{proof}
\vspace{-5pt}
However, simply capping queue length to one does not give the optimal cost. One also has to choose the correct maximal wait-time distribution $F_X$ for the jobs that join the queue. The next subsections deal with this problem in extensive detail.

\subsection{Optimal wait-time distribution}
Here, we restrict ourselves to single-slot queue as it is now well-established that it is the optimal policy for $\delta \leq \frac{\p(A\leq S_\mu)}{\lambda}$. We again consider a general job arrival process and spot arrival process. Our next result describes the optimal maximal wait-time distribution a job should have, to attain the minimum cost in Theorem \ref{thm: opt_queue} while satisfying average delay constraint.
\begin{theorem}\label{thm: opt_prob}
The optimal maximal wait-time distribution for the single slot scheduling policy is given by the solution to the following linear optimization problem,
\begin{align*}
    \max_{\substack{\int_{0}^\infty f_X(w) dw = 1 \\ \int_{0}^\infty f_X(w) \left(\int_{0}^w G_\mu (y) dy \right) dw = \frac{\delta}{1-\lambda \delta}}} \int_{0}^\infty f_X(w) F_\mu(w) dw
\end{align*}
where $f_X$ is the density of the maximal wait-time $X$ and $F_\mu(w) = 1 - G_\mu(w)$ is the CDF of spot inter-arrival times with mean $1/\mu$.
\end{theorem}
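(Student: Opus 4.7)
The plan is to reduce the cost-minimization in the single-slot regime to the linear program over $f_X$ stated in the theorem. Let $W := \min(X, S_\mu)$ be the waiting time of a job that actually joins the queue, with $S_\mu$ independent of $X$ modeling the time to the next spot arrival seen by that job. By conditioning on $X = w$ and using independence, the two integrals in the statement have clean probabilistic meanings:
\begin{align*}
\p(S_\mu \le X) &= \int_0^\infty f_X(w)\,F_\mu(w)\,dw,\\
\E[W] &= \int_0^\infty f_X(w)\int_0^w G_\mu(y)\,dy\,dw,
\end{align*}
the second via the tail formula $\E[\min(w, S_\mu)] = \int_0^w G_\mu(y)\,dy$.

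Next I would pin down $\pi_0$ and identify the objective. A balance-of-busy-time identity for the single-slot queue (equivalently, Little's law applied to the queue-occupancy indicator) gives $1-\pi_0 = \lambda\,\pi_0\,\E[W]$. Since only jobs that find the queue empty wait at all, $\E[T] = \pi_0\,\E[W] = (1-\pi_0)/\lambda$. The delay constraint is binding (as argued in the proof of Theorem~\ref{thm: opt_queue}), so setting $\E[T] = \delta$ forces $\pi_0 = 1 - \lambda\delta$ and
\begin{align*}
\E[W] = \frac{\delta}{1-\lambda\delta},
\end{align*}
which, after substituting the integral form of $\E[W]$, is exactly the second constraint of the LP. For the cost, an arrival finding the queue occupied pays $k$ (on-demand), while an arrival finding it empty pays $1$ with probability $\p(S_\mu \le X)$ and $k$ otherwise, giving
\begin{align*}
\E[C] = k - (k-1)\,\pi_0\,\p(S_\mu \le X).
\end{align*}
Since $\pi_0 = 1 - \lambda\delta$ is already fixed by the delay constraint, minimizing $\E[C]$ is equivalent to maximizing $\int_0^\infty f_X(w) F_\mu(w)\,dw$ subject to the density normalization and the delay equality -- precisely the LP claimed.

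The step I expect to be the main technical subtlety is justifying the use of $S_\mu$ itself, rather than its equilibrium residual-life distribution (which has density $\mu\,G_\mu$), as the law of the time to the next spot seen by an arriving job; for a non-Poisson spot renewal process these two differ in general. I would handle this via a regenerative argument at the epochs when the system becomes empty and the subsequent arrival begins its wait, writing all expectations over whole regenerative cycles so that the spot clock in each cycle is a fresh draw from $F_\mu$. Once that interpretation is in place, the two identities above are exact and the reduction to the LP is immediate.
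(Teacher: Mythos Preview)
Your reduction reaches the same LP but by a different route than the paper. The paper does \emph{not} derive equalities; instead it argues via worst-case \emph{upper bounds}. It bounds the cost by assuming an arriving job that finds the queue empty has just missed a spot and therefore must wait a fresh draw $S_\mu$, giving $\E[C]\le k-(k-1)(1-\lambda\delta)\,\p(X>S_\mu)$, and similarly bounds $T\le (1-N)\min\{X,S_\mu\}$ to obtain $\E[T]\le \E[W]/(1+\lambda\E[W])$; the LP then arises from optimizing these bounds (with the lower bound of Theorem~\ref{thm: opt_queue} closing the sandwich in the corollaries). The paper also reaches the two integrals by writing $\p(X>S_\mu)=\int G_X f_\mu$ and $\E[W]=\int G_X G_\mu$ and integrating by parts, whereas your conditioning-on-$X$ derivation of $\int f_X F_\mu$ and $\int f_X\!\int_0^w G_\mu$ is more direct.

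The place where your version needs more care is the pair of ``equalities'' $1-\pi_0=\lambda\pi_0\E[W]$ and $\E[T]=\pi_0\E[W]$. Both silently identify the time-average empty probability $\pi_0$ with the probability that an \emph{arriving} job sees an empty queue; for a general (non-Poisson) job arrival process these differ (ASTA fails), so the balance identity as written is not justified. The paper dodges this by staying with Little's law on time-averages and using inequalities rather than equalities. Likewise, your proposed regenerative fix for the $S_\mu$-versus-residual-life issue is delicate: the spot renewal process is independent of the job process and does not reset at the epochs you name, so the ``fresh $F_\mu$'' claim needs a sharper argument. This is precisely the subtlety the paper's worst-case bound is designed to avoid.
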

To see that this is an LP, observe that the constraints and the objective are weighted sums of the density $f_X$, in the form of integral. Below is the proof of the theorem.

\begin{proof}
Little's Law applies, which is
\begin{align}
    \E[N] = \lambda \E[T]
\end{align}
where $\lambda$ is the average rate of job arrival, $T$ is the time spent by a job in the system and $N$ ($0$ or $1$) is the number of jobs in the system. For general processes, the memory-less property of Poisson arrivals is not applicable. We, therefore, guard against the worst-case: a job arriving right after a spot departs. Hence, the cost of a job can be expressed as the following random variable with the inequality,
\begin{align}
\begin{split}
    C \leq& \text{ } k \cdot N + k\cdot (1-N)\cdot \mathbbm{1}(X\leq S_\mu)\\
    &+ 1\cdot (1-N)\cdot \mathbbm{1}(X>S_\mu)
\end{split}\\
    =& \text{ } k - (k-1)\cdot (1-N)\cdot \mathbbm{1}(X>S_\mu)
\end{align}
the expectation of which is
\begin{align}
    \E[C] &\leq k - (k-1)\cdot \E[\E[(1-N)\cdot \mathbbm{1}(X>S_\mu) | N]]\\
    &= k - (k-1)\cdot \E[(1-N)\mathbbm{P}(X>S_\mu)]\\
    &= k - (k-1)\cdot (1-\E[N])\cdot \p(X>S_\mu)
\end{align}
Note that $\E[S_\mu] = \frac{1}{\mu}.$ Now, use Little's Law to replace $\E[N]$ with $\lambda \E[T]$ and $\E[T]\leq \delta$. This gives us
\begin{align}\label{eqn5}
    \E[C] \leq k - (k-1)\cdot (1-\lambda \delta)\cdot \p(X>S_\mu)
\end{align}

Now, the problem boils down to choosing the distribution for $X$ that maximizes $\p(X>S_\mu)$ while ensuring $\E[T]\leq \delta.$ The latter condition can be simplified by using the fact that
\begin{align}
    T \leq (1-N)\cdot W
\end{align}
where $W = \min\{X,S_\mu\}.$ This is because the maximum time a job will have to wait is when it joins (meaning $N=0$) and just misses the previous spot instance. Taking expectation on both sides and then applying Little's Law, gives
\begin{align}
    \E[T] \leq (1-\lambda \E[T])\E[W]
\end{align}
which can be rewritten as
\begin{align}
    \E[T] \leq \frac{\E[W]}{1+\lambda \E[W]}
\end{align}
The deadline constraint $\E[T] \leq \delta$ is, therefore, ensured by
\begin{align}
    \E[W] = \frac{\delta}{1-\lambda \delta}.
\end{align}
Now,
\begin{align}
     \p(X>S_\mu) &= \int_{0}^\infty \p(X > w) f_\mu(w) dw\\
    \E[W] = \frac{\delta}{1-\lambda \delta} &= \int_{0}^\infty G_X(w) G_\mu(w) dw
\end{align}
where $G_X(w) = 1 - F_X(w)$ and $G_\mu(w) = 1-F_\mu(w)$. Now, recall that minimizing cost per job is equivalent to maximizing $\p(X>S_\mu)$. Therefore, the optimization problem
\begin{align}
	\max_{\E[W] = \frac{\delta}{1-\lambda \delta}} \p(X>S_\mu)
\end{align}
can be formulated as
\begin{align}
    \max_{\substack{G_X: \R \to [0,1]\\ G_X' \leq 0\\ \int_{0}^\infty G_X(w) G_\mu(w) dw = \frac{\delta}{1-\lambda \delta}}} \int_{0}^\infty G_X(w) f_\mu (w) dw 
\end{align}
Apply by-parts to the integral in the objective,
\begin{align}
\begin{split}
    \int_{0}^\infty &G_X(w) f_\mu (w) dw\\
    &= \bigg[G_X(w) F_\mu(w)\bigg]_{w=0}^{w = \infty} + \int_{0}^\infty f_X(w) F_\mu(w) dw
\end{split}\\
    &= 0 + \int_{0}^\infty f_X(w) F_\mu(w) dw.
\end{align}
Now apply by-parts to the constraint,
\begin{align}
\begin{split}
    \int_{0}^\infty &G_X(w) G_\mu(w) dw \\
    =& \bigg[ G_X(w) \int_{0}^w G_\mu (w) dw \bigg]_{w=0}^{w = \infty}\\
    &+ \int_{0}^\infty f_X(w) \left(\int_{0}^w G_\mu (y) dy \right) dw
\end{split}\\
    =& \text{ } 0 + \int_{0}^\infty f_X(w) \left(\int_{0}^w G_\mu (y) dy \right) dw.
\end{align}
Therefore, the optimization problem is
\begin{align}
    \max_{\substack{\int_{0}^\infty f_X(w) dw = 1 \\ \int_{0}^\infty f_X(w) \left(\int_{0}^w G_\mu (y) dy \right) dw = \frac{\delta}{1-\lambda \delta}}} \int_{0}^\infty f_X(w) F_\mu(w) dw
\end{align}
\end{proof}

\subsection{Solving for common distributions}\label{subsec: solving_for_distrb}
In this section, we analyze the optimization problem in Theorem \ref{thm: opt_prob} by considering some common distributions for inter-arrival times of spot instances. Our first result in this is for the entire class of finite support distributions.
\begin{corollary}\label{corr: fnt_supp_srvc}
Consider a general job arrival process, with average arrival rate $\lambda$, and allowing queue length at most one in steady state.  If the spot inter-arrival time has \textbf{finite support} $[0,L]$, a maximal wait-time distribution satisfying
\begin{align*}
    \p(X\geq L) &= \frac{\mu \delta}{1-\lambda \delta}\\
    \p(X=0) &= 1-\p(X\geq L)
\end{align*}
gives the optimal cost in Theorem \ref{thm: opt_queue}.
\end{corollary}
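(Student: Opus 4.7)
The plan is to extract the corollary directly from the LP in Theorem~\ref{thm: opt_prob}. Concretely, I would substitute the proposed two-point law into the LP's constraints and objective, show that it is feasible and attains value $\mu\delta/(1-\lambda\delta)$, and then chain this with the upper bound~\eqref{eqn5} and the optimum from Theorem~\ref{thm: opt_queue} to conclude that the resulting cost is optimal.

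First I would exploit the finite support: whenever $S_\mu$ lives in $[0,L]$, one has $F_\mu(w)=1$ and $\int_0^w G_\mu(y)\,dy=\E[\min(S_\mu,w)]=\E[S_\mu]=1/\mu$ for every $w\ge L$, while both vanish at $w=0$ (under the mild assumption that $F_\mu$ has no atom at $0$). These identities reduce the LP to an essentially two-point problem: any mass placed on $[L,\infty)$ enters the objective with coefficient $1$ and the constraint with coefficient $1/\mu$, while mass at $0$ contributes nothing to either. This makes $\{0\}$ and $[L,\infty)$ the natural support for any extremal solution of an LP with only two linear constraints.

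Next I would plug the candidate law in. Normalization is immediate, and the moment constraint becomes the algebraic identity $\p(X\ge L)\cdot(1/\mu)=\delta/(1-\lambda\delta)$, which is exactly the prescribed $\p(X\ge L)=\mu\delta/(1-\lambda\delta)$. The objective evaluates to $\p(X\ge L)\cdot 1+\p(X=0)\cdot 0=\mu\delta/(1-\lambda\delta)$, which, after substitution into~\eqref{eqn5}, yields $\E[C]\le k-(k-1)(1-\lambda\delta)\cdot\mu\delta/(1-\lambda\delta)=k-(k-1)\mu\delta$. Because Theorem~\ref{thm: opt_queue} asserts that $k-(k-1)\mu\delta$ is the \emph{minimum} attainable cost in the regime $\delta\le\p(A\le S_\mu)/\lambda$, the proposed distribution must in fact achieve this value with equality, which is what the corollary claims.

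The main obstacle is a measure-theoretic wrinkle at the boundary point $L$: on the event $\{X\ge L\}$ I used $\p(S_\mu<X)=1$, so we need to ensure that $F_\mu$ has no atom at $L$, or equivalently interpret the mass $\p(X\ge L)$ as being placed strictly above $L$. A parallel care is needed to see that $\E[T]\le\delta$, i.e.\ feasibility in the original problem; but this falls straight out of the bound $\E[T]\le\E[W]/(1+\lambda\E[W])$ established in the proof of Theorem~\ref{thm: opt_prob}, with $\E[W]=\p(X\ge L)\cdot\E[S_\mu]=\delta/(1-\lambda\delta)$. Modulo these boundary considerations, the proof reduces to two one-line substitutions into the LP and an invocation of Theorem~\ref{thm: opt_queue} as a matching lower bound.
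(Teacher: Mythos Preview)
Your proposal is correct and follows essentially the same approach as the paper: use the finite-support identities $F_\mu(w)=1$ and $\int_0^w G_\mu(y)\,dy=1/\mu$ for $w\ge L$ (and both vanishing at $w=0$) to evaluate the LP of Theorem~\ref{thm: opt_prob} on the two-point law, then sandwich the resulting cost between the upper bound~\eqref{eqn5} and the lower bound $k-(k-1)\mu\delta$ from Theorem~\ref{thm: opt_queue}. The paper's write-up is slightly less direct---it first derives the general identity $\int_0^\infty f_X(w)F_\mu(w)\,dw=\int_0^L f_X(w)\bigl(F_\mu(w)-\mu\int_0^w G_\mu\bigr)\,dw+\tfrac{\mu\delta}{1-\lambda\delta}$ and then argues the first integral must vanish---but the logical content is the same, and your boundary remark about placing mass in $[L,\infty)$ rather than exactly at $L$ matches the paper's own handling of that point.
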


The most interesting aspect of this result is that the solution to the optimization problem in Theorem \ref{thm: opt_prob} is agnostic to the specifics of the spot inter-arrival time. Below is the proof.

\begin{proof}
When spot inter-arrival times are finite support,
\begin{align}
    \int_0^w G_\mu(y) dy = \begin{cases}
        h(w) & w \in w\in [0,L]\\
        \frac{1}{\mu} & w>L
    \end{cases}
\end{align}
where $h(w):[0,L] \to [0,1/\mu]$ is an increasing function. The wait-time constraint integral can therefore be split into
\begin{align}
    \frac{\delta}{1-\lambda \delta} &= \int_0^\infty f_X(w) \int_0^w G_\mu(y) dy\\
    &=\int_0^L f_X(w) \int_0^w G_\mu(y) dy + \frac{1}{\mu} (1-F_X(L)) \label{eqn3}
\end{align}
which gives
\begin{align}\label{eqn2}
    1-F_X(L) = \frac{\mu \delta}{1-\lambda \delta} - \int_0^L f_X(w) \mu \int_0^w G_\mu(y) dy.
\end{align}
Now the objective to maximize evaluates as
\begin{align}
    \int_0^\infty f_X(w) F_\mu (w) dw = \int_0^L f_X(w)F_\mu(w) dw + (1-F_X(L)).
\end{align}
Plugging $(1-F_X(L))$ from \eqref{eqn2}, we get
\begin{align}\label{eqn4}
\begin{split}
     &\int_0^\infty f_X(w) F_\mu (w) dw\\
     &= \int_0^L f_X(w) \left(F_\mu(w) - \mu \int_0^w G_\mu(y) dy \right) dw + \frac{\mu \delta}{1-\lambda \delta}.
\end{split}
\end{align}
Recall Theorem \ref{thm: opt_queue}, the cost per job is lower bounded as, 
\begin{align}\label{eqn10}
	\E[C] \geq k - (k-1)\mu \delta
\end{align}
for single-slot queue. Also recall that the LHS of \eqref{eqn4} is $\p(X>S_\mu)$. Plugging this in \eqref{eqn5} gives us the following
\begin{align}
	\begin{split}
		\E[C] \leq& k - (k-1)\mu \delta -(k-1)(1-\lambda \delta) \times \ldots\\
		& \ldots \times \int_0^L f_X(w) \left(F_\mu(w) - \mu \int_0^w G_\mu(y) dy \right) dw
	\end{split}
\end{align}
First, this means the integral has to be non-positive, otherwise it contradicts \eqref{eqn10}. Second, making this integral zero gives the cost upper bound same as its hard lower bound. We, therefore, conclude that the \textit{optimal distribution is the one which makes the integral in \eqref{eqn4} zero}. Now, observe that
\begin{align}
    F_\mu(w) - \mu \int_0^w G_\mu(y) dy = 0 \text{ if } w=0,L.
\end{align}
For $w=0$, it is natural and for $w=L$ the integral in $G_\mu$ becomes $\frac{1}{\mu}$ and $F_\mu(L) = 1$. This means weights on $w=0$ and in $[L,\infty)$ allows optimal cost. Now, we come back to \eqref{eqn3}, which we can rewrite as
\begin{align}\label{eqn11}
    \frac{\delta}{1-\lambda \delta} = \int_{0}^{L^{-}} f_X(w)\int_0^w G_\mu(y) dy + \frac{1}{\mu} (1-F_X(L))
\end{align}
because $\int_0^L G_\mu(y) dy = \frac{1}{\mu}$, meaning a mass at $w = L$ has the same effect on the wait-time as the measure in $(L,\infty)$. Putting $\p(X\geq L) = p$ and $\p(X=0) = 1-p$ and plugging it in \eqref{eqn11} gives
\begin{align}
	\frac{\delta}{1-\lambda \delta} = 0 + \frac{p}{\mu} \implies p = \frac{\mu \delta}{1-\lambda \delta}
\end{align}
finishing the proof.
\end{proof}


\begin{corollary}\label{corr: uniform_srvc}
The class of distributions in Corollary \ref{corr: fnt_supp_srvc} is \textbf{unique} when the service time is uniform distribution.
\end{corollary}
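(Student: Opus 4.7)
The plan is to revisit the structural characterization from the proof of Corollary~\ref{corr: fnt_supp_srvc}: any maximal wait-time distribution $F_X$ achieves the lower bound of Theorem~\ref{thm: opt_queue} if and only if
\begin{align*}
\int_0^L f_X(w)\Bigl(F_\mu(w) - \mu \int_0^w G_\mu(y)\,dy\Bigr) dw = 0,
\end{align*}
the integrand being pointwise non-positive. Hence uniqueness boils down to identifying the zero set of the function $\varphi(w) := F_\mu(w) - \mu \int_0^w G_\mu(y)\,dy$ on $[0,L]$ when $S_\mu \sim \mathrm{Uniform}[0,L]$.

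The first step is a direct computation. For $S_\mu \sim \mathrm{Uniform}[0,L]$, $\mu = 2/L$, $F_\mu(w) = w/L$, $G_\mu(w) = 1 - w/L$ on $[0,L]$, so
\begin{align*}
\varphi(w) = \frac{w}{L} - \frac{2}{L}\Bigl(w - \frac{w^2}{2L}\Bigr) = \frac{w}{L}\Bigl(\frac{w}{L} - 1\Bigr),
\end{align*}
which vanishes \emph{only} at $w = 0$ and $w = L$ and is strictly negative on $(0,L)$. So the argument that was used in Corollary~\ref{corr: fnt_supp_srvc} in reverse now applies: to make the non-positive integrand integrate to zero against the density $f_X$, the measure induced by $F_X$ must place zero mass on $(0,L)$.

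The second step is to show that the remaining mass must be split exactly as claimed. Since all the mass of $X$ sits on $\{0\} \cup [L,\infty)$, write $p := \p(X \ge L) = 1 - \p(X=0)$. Because $\int_0^L G_\mu(y)\,dy = 1/\mu$, any mass placed in $[L,\infty)$ contributes identically $1/\mu$ to the wait-time integral and $1$ to $F_\mu$, so the wait-time constraint $\int_0^\infty f_X(w)\int_0^w G_\mu(y)\,dy\,dw = \delta/(1-\lambda\delta)$ collapses to $p/\mu = \delta/(1-\lambda\delta)$, forcing $p = \mu\delta/(1-\lambda\delta)$ uniquely. This matches the class prescribed in Corollary~\ref{corr: fnt_supp_srvc}, which establishes uniqueness.

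The only subtle point I anticipate is how to phrase ``uniqueness'' cleanly: the distribution on $[L,\infty)$ is not pinned down pointwise, since $\varphi \equiv 0$ there and all such placements yield the same cost and the same $\E[T]$. I would therefore frame the statement as uniqueness of the two-atom structure $(\p(X=0),\,\p(X\ge L)) = (1-p,\,p)$ with $p = \mu\delta/(1-\lambda\delta)$, treating any redistribution of the mass $p$ within $[L,\infty)$ as belonging to the same equivalence class. With that convention, the strict negativity of $\varphi$ on $(0,L)$ is the only ingredient needed, and the main obstacle is simply making this equivalence explicit.
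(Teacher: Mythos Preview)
Your proposal is correct and follows essentially the same route as the paper: compute $\varphi(w)=F_\mu(w)-\mu\int_0^w G_\mu(y)\,dy = w^2/L^2 - w/L$ for the uniform case, observe it is strictly negative on $(0,L)$, and conclude that any optimal $X$ must place all its mass on $\{0\}\cup[L,\infty)$. Your second step and your remark about the equivalence class on $[L,\infty)$ are helpful clarifications that the paper leaves implicit (it relies on Corollary~\ref{corr: fnt_supp_srvc} for the value of $p$), but the core argument is identical.
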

\begin{proof}
We have $G_\mu(y) = 1-\frac{y}{L}$. Now consider \\ $\left(F_\mu(w) - \mu \int_0^w G_\mu(y) dy \right)$ in \eqref{eqn4} in this context,
\begin{align}
    F_\mu(w) - \mu \int_0^w G_\mu(y) dy = \frac{w^2}{L^2} - \frac{w}{L}
\end{align}
If $X$ has a positive measure in $(0,L)$, 
\begin{align}
    \int_0^L f_X(w)\left(\frac{w^2}{L^2} - \frac{w}{L} \right) dw < 0
\end{align}
and the objective is strictly less than $\frac{\mu \delta}{1-\lambda \delta}$. Consequently, the cost per job is larger than $k - (k-1)\mu \delta$ (where $\mu = 2/L$). This is not desirable. Therefore, $X$ can have positive measure only at $0$ and in $[L,\infty)$, proving uniqueness in this case.
\end{proof}
\begin{remark}\label{rem: uniform_srvc_max}
Consider the conditions in Corollary \ref{corr: fnt_supp_srvc}. If we additionally want to minimize the largest possible wait-time value, the distribution is unique
\begin{align*}
    X = \begin{cases}
        0 & \text{w.p } 1- \frac{\mu \delta}{1-\lambda \delta}\\
        L & \text{w.p } \frac{\mu \delta}{1-\lambda \delta}
    \end{cases}
\end{align*}
\end{remark}
Next, we consider exponential inter-arrival time for spot instances, a common model in job-scheduling \cite{brown2005statistical}.
\begin{corollary}\label{corr: exp_srvc}
Consider a general job arrival process, with average rate $\lambda$, and allowing queue length at most one in steady state. Any maximal wait-time distribution satisfying
\begin{align*}
    \mathcal{L}\{f_X\}(\mu) = \frac{1 - (\lambda + \mu)\delta}{1-\lambda \delta}
\end{align*}
has optimal cost when $S_\mu \sim Exp(\mu)$. Here $\mathcal{L}\{f_X\}(\cdot)$ is the Laplace transform of density $f_X$ of the maximal wait-time.
\end{corollary}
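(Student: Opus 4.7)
The plan is to specialize the linear program of Theorem \ref{thm: opt_prob} to the exponential case and exploit the fact that, for exponential spot inter-arrival times, the objective and the wait-time constraint collapse into the same functional of $f_X$.

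First, I would compute the relevant functionals for $S_\mu \sim \mathrm{Exp}(\mu)$: namely $G_\mu(y) = e^{-\mu y}$, $F_\mu(w) = 1 - e^{-\mu w}$, and the key identity
\begin{equation}
    \int_0^w G_\mu(y)\, dy \;=\; \frac{1-e^{-\mu w}}{\mu} \;=\; \frac{F_\mu(w)}{\mu}.
\end{equation}
Substituting this into the wait-time constraint of Theorem \ref{thm: opt_prob} gives
\begin{equation}
    \int_0^\infty f_X(w)\, F_\mu(w)\, dw \;=\; \frac{\mu\,\delta}{1-\lambda\,\delta},
\end{equation}
which is exactly the objective. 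Hence every feasible $f_X$ automatically attains the same (and therefore optimal) objective value $\tfrac{\mu\delta}{1-\lambda\delta}$, and no further maximization is needed.

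Second, I would rewrite this common value in Laplace-transform form. Using $F_\mu(w) = 1 - e^{-\mu w}$,
\begin{equation}
    \int_0^\infty f_X(w)\, F_\mu(w)\, dw \;=\; 1 - \int_0^\infty e^{-\mu w} f_X(w)\, dw \;=\; 1 - \mathcal{L}\{f_X\}(\mu).
\end{equation}
Setting this equal to $\tfrac{\mu\delta}{1-\lambda\delta}$ and simplifying yields
\begin{equation}
    \mathcal{L}\{f_X\}(\mu) \;=\; 1 - \frac{\mu\,\delta}{1-\lambda\,\delta} \;=\; \frac{1-(\lambda+\mu)\,\delta}{1-\lambda\,\delta},
\end{equation}
which is precisely the claimed characterization.

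There is no real obstacle here beyond noticing the coincidence $\int_0^w G_\mu(y)dy = F_\mu(w)/\mu$, which is the memorylessness of the exponential manifesting itself at the level of the LP; once observed, it trivializes the optimization and leaves only an algebraic rearrangement. The one thing worth flagging is non-uniqueness: unlike the uniform case of Corollary \ref{corr: uniform_srvc}, here any $f_X$ whose Laplace transform at $\mu$ hits the prescribed value is optimal, which is why the statement says ``any'' rather than ``the'' distribution.
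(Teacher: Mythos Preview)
Your proof is correct and follows essentially the same route as the paper: both specialize Theorem~\ref{thm: opt_prob} to $S_\mu\sim\mathrm{Exp}(\mu)$, compute $\int_0^w G_\mu(y)\,dy=(1-e^{-\mu w})/\mu$, and observe that the constraint forces $1-\mathcal{L}\{f_X\}(\mu)=\mu\delta/(1-\lambda\delta)$ while the objective evaluates to the very same quantity. Your framing of this as the constraint and objective collapsing into one functional (via $\int_0^w G_\mu = F_\mu/\mu$) is a slightly crisper way of saying what the paper does by computing each side separately, but the content is identical.
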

\begin{proof}
$G_\mu(x) = 1 - F_\mu(x) = \exp(-\lambda x)$, implying
\begin{align}
\begin{split}
    &\int_{0}^\infty f_X(w) \left(\int_{0}^w G_\mu (y) dy \right) dw \\
    &= \int_{0}^\infty f_X(w) \left(\frac{1-\exp(-\mu w)}{\mu} \right) dw
\end{split}\\
&= \frac{1}{\mu}\left(\int_{0}^\infty f_X(w) dw - \int_{0}^\infty f_X(w) \exp(-\mu w) dw \right)\\
&= \frac{1 - \mathcal{L}\{f_X\}(\mu)}{\mu} = \frac{\delta}{1-\lambda \delta}
\end{align}
which gives the condition in the corollary. As for why it alone gives optimal cost, the value of the objective proves that.
\begin{align}
	\int_{0}^\infty f_X(w) F_\mu(w) dw &= \int_{0}^\infty f_X(w) (1-\exp(-\mu w)) dw\\
	&= 1 - \mathcal{L}\{f_X\}(\mu) = \frac{\mu \delta}{1-\lambda \delta}.
\end{align}
\end{proof}

\begin{remark}\label{rem: exp_srvc_exp_reneg}
Consider the conditions of Corollary \ref{corr: exp_srvc}. If $X \sim Exp(\phi)$, the optimal rate is $\phi = \frac{1}{\delta} - (\mu + \lambda).$
\end{remark}

\begin{corollary}\label{corr:exp_srvc_max}
Consider the conditions in Corollary \ref{corr: exp_srvc}. If we additionally want to minimize the largest possible wait-time value, the distribution is unique and deterministic,
\begin{align*}
    X=  \frac{1}{\mu} \log \left[\frac{1-\lambda \delta}{1- (\lambda+ \mu) \delta} \right]
\end{align*}
\end{corollary}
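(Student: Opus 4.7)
The plan is to turn the Laplace-transform optimality condition from Corollary \ref{corr: exp_srvc} into a pointwise lower bound on the essential supremum of $X$, and then observe that this bound is attained only by a point mass. First, set $M := \mathrm{ess\,sup}\,X$, which is precisely the quantity we are asked to minimize. Since $x \mapsto e^{-\mu x}$ is strictly decreasing on $[0,\infty)$ and $X \leq M$ almost surely, we get the pointwise inequality $e^{-\mu X} \geq e^{-\mu M}$, and taking expectation yields $\mathcal{L}\{f_X\}(\mu) = \E[e^{-\mu X}] \geq e^{-\mu M}$.

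Combining this with the identity $\mathcal{L}\{f_X\}(\mu) = \frac{1-(\lambda+\mu)\delta}{1-\lambda\delta}$ imposed by Corollary \ref{corr: exp_srvc} and taking logarithms gives the clean lower bound
\begin{align}
M \geq \frac{1}{\mu}\log\left[\frac{1-\lambda\delta}{1-(\lambda+\mu)\delta}\right],
\end{align}
so no admissible wait-time distribution can have a smaller supremum than the value claimed. For tightness and uniqueness, I would then note that the inequality $\E[e^{-\mu X}] \geq e^{-\mu M}$ is tight if and only if $e^{-\mu X} = e^{-\mu M}$ almost surely, which, by the strict monotonicity of the exponential already used, forces $X = M$ almost surely. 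Hence any minimizer is necessarily the deterministic random variable stated in the corollary, and a one-line substitution confirms that this point mass does satisfy the Laplace-transform condition.

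The only real obstacle is checking admissibility: the right-hand side above must be a legitimate value in $[0,1]$ for a Laplace transform evaluated at $\mu$, which is equivalent to $M$ being finite, i.e.\ $(\lambda+\mu)\delta < 1$. This is consistent with the low-delay regime under consideration and in fact matches the threshold $\delta \leq \p(A\leq S_\mu)/\lambda$ exactly when the job arrivals are also exponential. Once this admissibility remark is made, the proof closes with no further computation.
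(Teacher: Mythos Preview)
Your argument is correct and in fact slightly more streamlined than the paper's. The paper reaches the same lower bound via Jensen's inequality: from the convexity of $w\mapsto e^{-\mu w}$ it first obtains $\mathcal{L}\{f_X\}(\mu)\geq e^{-\mu\E[X]}$, deduces $\E[X]\geq \tfrac{1}{\mu}\log\!\bigl[\tfrac{1-\lambda\delta}{1-(\lambda+\mu)\delta}\bigr]$, and then appends the trivial bound $\max X\geq \E[X]$; uniqueness is argued through the equality case of Jensen together with $\max X=\E[X]$. You bypass the detour through $\E[X]$ entirely by using only the monotonicity of $e^{-\mu\cdot}$ on the pointwise inequality $X\leq M$, which gives $\mathcal{L}\{f_X\}(\mu)\geq e^{-\mu M}$ directly. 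The upside of your route is that it needs no convexity and makes the equality case immediate (a nonnegative integrand with zero mean must vanish a.s.); the paper's Jensen step, on the other hand, yields the incidental information that even $\E[X]$ already meets the bound, not just the essential supremum. Either way the computation is one line, and your admissibility remark about $(\lambda+\mu)\delta<1$ is a welcome addition that the paper leaves implicit.
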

\begin{proof}
Consider any density function $f_X$. Using Jensen's inequality, we have that
\begin{align}\label{eqn12}
    \int f_X(w) \exp (-\mu w) dw &\geq \exp \left( -\mu \int w f_X(w) dw \right)\\
    \frac{1 - (\lambda + \mu)\delta}{1-\lambda \delta} &\geq \exp (-\mu \E[X])\\
    \implies \E[X] &\geq \frac{1}{\mu} \log \left[\frac{1-\lambda \delta}{1- (\lambda+ \mu) \delta} \right]\label{eqn13}
\end{align}
and $\max X \geq \E[X]$. Therefore, the largest wait-time is at least the RHS of the above equation. This minimum value is achieved when Jensen's \eqref{eqn12} is an equality and  $\max X = \E[X]$, happening only when $X$ is deterministic and equal to $\frac{1}{\mu} \log \left[\frac{1-\lambda \delta}{1- (\lambda+ \mu) \delta} \right]$.
\end{proof}

\section{Scheduling under high target delay $\delta$}
In the previous sections, we performed an in-depth analysis for the case where delay constraint is strong ($\delta$ is small). In this section, we use the insights from the last section to address the case when the mean delay constraint is larger. In such cases, the following questions arise:
\begin{enumerate}
    \item Can a larger (than one) queue-length be allowed?
    \item What should be the joining/waiting policy for these jobs?
    \item Is there be a cap on queue-length $N$ to prevent overload?
\end{enumerate}
To answer all these questions, we go back to Theorem \ref{thm: gen_cost} and Little's Law, to first formulate this general problem. Recall that for any $G/G/1$ system, we are trying to minimize $\E[C] = k - (k-1)\cdot \frac{\mu}{\lambda} (1-\pi_0)$, which amounts to maximizing $(1-\pi_0) = \sum_{n=1}^\infty \pi_n.$ Now, Little's law dictates that $\lambda \delta = \E[N] = \sum_{n=1}^\infty n \pi_n.$ With a general job-arrival process, general spot inter-arrival times and any scheduling policy, the objective is:
\begin{align}\label{eqn: gen_opt_problem}
    \max_{\substack{ 0\leq \pi_n \leq 1}} \quad\sum_{n=1}^\infty \pi_n
\end{align}
subject to constraints
\vspace{-5pt}
\begin{align}
    \sum_{n=1}^\infty n \pi_n &\leq \lambda\delta \label{eqn: wait-time-constraint}\\
    \sum_{n=1}^\infty \pi_n &\leq 1 \label{eqn: total-probb-constraint}
\end{align}

\subsection{Near-optimal Heuristic Scheduling Policy}\label{subsec: heurstic-scheduling}
For $G/G/1$ system, one cannot characterize the steady state distribution $\{\pi_n\}_n$ or the probability generating function $p(z)$ in closed form. Therefore, both the wait-time constraint and the objective cannot be expressed as a function of the decision variables: (i) joining probabilities $\{q_n\}_{N=0}^\infty$ based on the observed queue-length $N$ and, (ii) the wait-time distribution $f_{X_n}(\cdot)$ for the $n^{th}$ job in the queue. 

We, therefore, exploit the structure involved in the optimization problem \eqref{eqn: gen_opt_problem} to come up with a greedy scheduling policy, without explicitly using the dependence on steady state probabilities $\{\pi_n\}_n$ on the decision variables. The following result summarizes this greedy approach:
\begin{theorem}\label{thm: gen_sys_greedy}
    Consider general job arrival process, general spot arrival process and average wait-time constraint $\delta \in \left(\frac{N^*}{\lambda},\frac{N^*+1}{\lambda} \right]$. In steady state, greedy scheduling policy has the following 3-phase form:
    \begin{enumerate}
        \item Phase 1: Jobs join and wait indefinitely: 
        \begin{align*}
            \exists \Hat{N} > N^* \text{ such that } q_N &= 1 \text{ } \forall \text{ } N<\Hat{N}\\
            X_n &= \infty \text{ } \forall \text{ } n\leq \Hat{N}
        \end{align*}
        \item Phase 2: For the $(\Hat{N}+1)^{(th)}$ job, it will have a joining probability $q_{\Hat{N}+1} \in (0,1)$ such that with wait-time $X_{\Hat{N}+1} = \infty$, the average wait-time of the system is exactly $\delta$.
        \item Phase 3: Arriving jobs are go directly on to on-demand: $$q_N = 0 \text{ } \forall \text{ } N\geq\Hat{N}+1.$$
    \end{enumerate}
\end{theorem}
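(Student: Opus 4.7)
The plan is to treat the optimization \eqref{eqn: gen_opt_problem} explicitly as a continuous knapsack LP over $\{\pi_n\}_{n\ge 1}$, where each slot $n$ carries value $1$ in the objective and weight $n$ in the wait-time constraint \eqref{eqn: wait-time-constraint}. Since the value-to-weight ratio $1/n$ is strictly decreasing in $n$, the greedy LP optimum allocates mass to $\pi_1$ first, then $\pi_2$, and so on, until the budget $\lambda\delta$ is exhausted. For $\lambda\delta \in (N^*, N^*+1]$ this profile is supported on slots $1, 2, \ldots, N^*+1$, with zero mass beyond. The object of the proof is to show that the 3-phase policy in the theorem is exactly what it takes to realize this greedy LP profile as a bona fide steady state of a $G/G/1$ queue controlled through $(\{q_n\},\{X_n\})$.

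The first substantive step is to argue that waiting indefinitely ($X_n=\infty$) is itself a greedy choice. If an admitted job is given a finite deadline and reneges before a spot arrives, it incurs the on-demand cost $k$ anyway, so the cost contribution is identical to rejecting the same job at the gate through $q_n$. During its aborted stay, however, the reneging job inflates the observed queue length and hence $\E[N]$, which is the binding resource in \eqref{eqn: wait-time-constraint}. A pathwise coupling then shows that substituting gate-rejection for reneging weakly dominates: equal cost, strictly smaller $\E[N]$. Consequently, within the greedy class one loses nothing by fixing $X_n=\infty$ whenever a job is admitted, reducing the control handle to the admission probabilities $\{q_n\}$ alone.

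With $X_n = \infty$ fixed, the "fill smallest $n$ first" LP principle translates into a monotone threshold on $\{q_n\}$: admitting when the queue is short populates $\pi_n$ for small $n$ (cheap in the wait-time budget), while refusing when the queue is long forces $\pi_n = 0$ for large $n$. The greedy policy therefore sets $q_n = 1$ below some cutoff $\hat{N}$ and $q_n = 0$ above, with a single fractional admission at the cutoff to exhaust the budget exactly. Identifying this cutoff with $\hat{N}$ gives Phases 1 and 3, and the fractional admission gives Phase 2. An intermediate-value argument on $\E[N]$ as a function of the single fractional parameter then determines the value of that probability which makes the average wait-time exactly $\delta$. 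The strict inequality $\hat{N}>N^*$ is forced by the elementary fact that under a finite cap the mean queue length is strictly smaller than the cap, so reaching $\E[N] = \lambda\delta > N^*$ requires a physical cap set above $N^*$.

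The main obstacle will be justifying the continuity and monotonicity of $\E[N]$ in the fractional admission probability without access to a closed form for $\{\pi_n\}$ in a general $G/G/1$ system. My plan is to handle this via regeneration: under any candidate greedy policy the queue is deterministically bounded by $\hat{N}+1$, so the underlying process regenerates at every return to the empty state, and standard renewal-reward theory guarantees existence and uniqueness of the stationary distribution for every choice of the fractional parameter. A coupling between two policies that differ only in this parameter then produces a pathwise monotone dependence of the queue length on the parameter, which upgrades to the required continuity and monotonicity of $\E[N]$. Once this dependence is in hand, the 3-phase structure follows purely from the knapsack reasoning, so no explicit evaluation of $\{\pi_n\}$ is needed.
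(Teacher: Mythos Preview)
Your proposal is correct and follows essentially the same line as the paper: both identify the knapsack structure of \eqref{eqn: gen_opt_problem}--\eqref{eqn: total-probb-constraint}, prioritize mass on low-index $\pi_n$, and arrive at the threshold-plus-fractional-admission form via an intermediate-value argument on the single fractional parameter. Your pathwise dominance argument for $X_n=\infty$ and your regeneration/coupling justification for the continuity and monotonicity of $\E[N]$ are in fact more careful than the paper's own treatment, which constructs the policy by iteratively raising the cap and handles these steps heuristically.
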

The goal is to come up with a scheduling policy and prove its optimality from a rather abstract LP problem \eqref{eqn: gen_opt_problem},\eqref{eqn: wait-time-constraint}, \eqref{eqn: total-probb-constraint} which might not even admit a real policy.

\begin{proof}
The first idea is to keep in mind the knapsack structure of the objective \eqref{eqn: gen_opt_problem} with \eqref{eqn: wait-time-constraint} and \eqref{eqn: total-probb-constraint}. Notice that all $\pi_n$ contribute equally to the objective and the constraint \eqref{eqn: total-probb-constraint}. On the other hand, in \eqref{eqn: wait-time-constraint}, the corresponding coefficients are increasing in $n$. This signals that one needs to prioritize maximizing $\pi_n$ for \textit{lower} $n$.

Second, we have $\lambda\delta \in (N^*,N^*+1].$ This motivates us to split the wait-time constraint into two sums:
$$\sum_{n=1}^{N^*} n\pi_n + \sum_{n=N^*+1}^\infty n\pi_n$$
which create two phases for the formulating the scheduling policy. Up to queue-length $N^*$, the scheduler have any policy and still satisfy $\sum_{n=1}^{N^*} n\pi_n < N^* < \lambda\delta.$ Hence, to maximize spot-instance utilization, the optimal policy will at least satisfy:
\begin{align}
    q_{N-1} = 1 \text{ } \forall \text{ } N<N^* \text{ and }
    X_n = \infty \text{ } \forall \text{ } n\leq N^*.
\end{align}
Now, suppose we follow this policy with additionally capping the queue-length at $N^*$ and define the average wait-time under this policy as $\Bar{T}_{N^*}$. Then the slack in wait-time constraint will be $\delta-\Bar{T}_{N^*}$. 

Third, to exploit this slack, we need to schedule more jobs into spot-processing. This has to be done in a manner to maximize spot-instance utilization while ensuring maximization of the slack delay-constraint. Recall that this constraint suggests we prioritize maximizing for lower index $n$ first. Consequently, the following steps can be iteratively performed right until the delay constraint is just violated, starting with $k=1$:
\begin{enumerate}[label=(\alph*)]
    \item Set joining probabilities as
    \begin{align}
        q_N = \begin{cases}
            1 & N\leq N^*+k\\
            0 & N>N^*+k
        \end{cases}
    \end{align}
    and $X_n=\infty$ for all $n\leq N^*+k$.
    \item Empirically calculate the average wait-time $\Bar{T}_{N^*+k}$
    \item Stop if $\delta < \sum_{n=1}^{N^*+k}n \pi_n^{(k)},$ else increment $k$ by $1$ and repeat.
\end{enumerate}
Suppose we stop at $k = \Hat{k}$ and define $\hat{N} := N^*+\Hat{k}-1.$ This means any job can freely join and wait indefinitely until a queue-length of $\Hat{N}$ is reached. For the $(\Hat{N}+1)^{(th)}$ incoming job, if it joins with probability $q_{\Hat{N}} = 1$ and waits indefinitely, the average wait-time constraint is violated, that is $\Bar{T}_{\Hat{N}+1} > \delta$. At the same time, if it immediately goes for on-demand processing, the delay constraint is not completely utilized. 


Lastly, for the $(\Hat{N}+1)^{(th)}$ job, we have to exploit the remaining slack by one of two methods: (i) set $q_{\hat{N}} = 1$ and optimize over an infinite-dimensional distribution $f_{X_{\Hat{N}+1}}(\cdot)$ or, (ii) Set $X_{\Hat{N}+1} = \infty$ and tune $q_{\Hat{N}}$ in $(0,1)$. It is clear that the latter option is the only tractable method. Observe that tuning $q_{\Hat{N}}$ in $(0,1)$ will in fact lead to $\E[T] = \delta$ (binding and not slack) because for $q_{\Hat{N}} = 0$ there is slack and for $q_{\Hat{N}} = 1$, there is violation.

Having exploited all slack in the delay constraint, any more jobs arriving will naturally be sent directly to on-demand instances, that is, 
$$q_N = 0 \text{ } \forall \text{ } N\geq\Hat{N}+1.$$
\end{proof}
\vspace{-10pt}
Finally, for finite-support spot inter-arrival times, we show this technique is in fact optimal.
\begin{remark}
    The effective delay constraint for the $(\Hat{N}+1)^{(th)}$ job is small (due to jobs in front waiting indefinitely). Recall Corollary \ref{corr: fnt_supp_srvc}, where over finite support $S_\mu \in [0,L]$, the optimal policy (for small $\delta$) is to have a single job waiting with maximal wait-time density $f_X$ having masses only at $\{0\}$ and in $[L,\infty)$. Such a distribution is exactly equivalent to joining the queue with probability $q_{\Hat{N}+1}$ =  $\p(X\geq L)$ and waiting indefinitely for spot-service. 
\end{remark}



\subsection{Learning optimal scheduling on-the-fly}
The scheduling policy laid out by Theorem \ref{thm: gen_sys_greedy} is aimed at maximizing spot instance utilization (by having $X_n = \infty$ for $n\leq \Hat{N}+1$) and optimizing over the knapsack structure of the abstract optimization problem \eqref{eqn: gen_opt_problem},\eqref{eqn: wait-time-constraint},\eqref{eqn: total-probb-constraint}. Since both of these objectives directly correlate to minimizing average cost per job, the only objective now is to exploit the slack in the delay constraint to the fullest.

Such a policy is required to compute the optimal $\hat{N}$ and the joining probability $q_{\hat{N}}$. $\Hat{N}$ along with $q_{\hat{N}}$ represents a system that has a fractional maximum queue-length $r^* \in (\hat{N},\Hat{N}+1)$. We model this continuous variable as 
\begin{align}\label{eqn: r_formula}
    r^* = \Hat{N}+q_{\Hat{N}}.
\end{align}
Under this modeling assumption, we want to learn the optimal $r^*$ and then utilize the scheduling policy in Theorem \ref{thm: gen_sys_greedy} with $\Hat{N} = \lfloor r^* \rfloor$ and $q_{\Hat{N}} = r^* - \Hat{N}.$ 

We had noted that one might need to compute the average wait-time for each system with maximum queue-length in $\{N^*+1,\ldots,\Hat{N}\}$ and then tune $q_{\hat{N}} \in (0,1)$ to exploit the wait-time constraint to the fullest. With the model \eqref{eqn: r_formula}, we can optimize over these two quantities simultaneously in the following manner. 

To converge to zero slack in the average-delay, while ensuring average cost minimization, we use the following methodology:
\begin{enumerate}
    \item Set the scheduling policy as the one in Theorem \ref{thm: gen_sys_greedy} with $\Hat{N} = \lfloor r \rfloor$ and $q_{\Hat{N}} = r-N$
    \item Apply this policy for jobs/spot arrivals over a window $W$ to empirically estimate the average delay $d(r)$
    \item Penalize the slack/violation in the delay constraint through the following loss function:
    \begin{align}
        L(r) = \frac{1}{2}(d(r)-\delta)^2
    \end{align}
    and use gradient descent on it to update $r$.
\end{enumerate}



Algorithm \ref{alg: adaptive_adm_ctrl} below summarizes and implementation of the aforementioned technique. It is worthwhile to note that our algorithm \textit{learns} the optimal parameters while scheduling jobs according to the greedy-optimal policy (Theorem \ref{thm: gen_sys_greedy}). 

\begin{algorithm}
\caption{Adaptive Admission Control Policy}
\label{alg: adaptive_adm_ctrl}
\begin{algorithmic}[1]
\REQUIRE Target average delay $\delta$, step-size $\alpha > 0$, initial $r \in [0, r_{\max}]$, averaging window $W$, error tolerance $\epsilon$
\ENSURE Admission control: $r = N + p$ (learning objective) 

\STATE Initialize $r \gets r_0 \in [0, r_{\max}]$
\WHILE{$|\Hat{d}-\delta| < \epsilon$}
    \STATE Set $N \gets \lfloor r \rfloor$, $p \gets r - N$
    \STATE Set the scheduling policy:
    \begin{itemize}
        \item If queue length $< N$: admit job
        \item If queue length $= N$: admit with probability $p$
        \item If queue length $> N$: reject and use on-demand
    \end{itemize}
    \STATE Observe arrivals \& completions over time window $W$
    \STATE $d(r) \gets$ empirical avg. delay  for jobs arriving in $W$
    \STATE Compute gradient $\eta \gets \eta \cdot (d(r) - \delta)$

    \STATE Update: $r \gets \min\left( r_{\max}, \max(0, r - \eta) \right)$
\ENDWHILE
\end{algorithmic}
\end{algorithm}
The optimality of the greedy approach in Theorem \ref{thm: gen_sys_greedy} and the near-optimality of Algorithm \ref{alg: adaptive_adm_ctrl}'s learning-with-scheduling scheme is cemented by extensive numerical experiments.

\section{Experiments}\label{section:experiments}

Our experimental set-up is as follows. Owing to the generality of the job arrival process, we test under two distributions for the job inter-arrival times: (i) $\exp{(1/\lambda)}$ and (ii) Gamma$(1/\lambda,1)$. For the spot availability process, we consider two settings: (i) statistically-modeled instance availability process by \cite{kadupitige2020modeling} and (ii) $Poisson(\mu)$. On-demand instances carry a cost of $k=10$ while spot instances are unit cost. Average wait-times/delays are in the unit of hours.

For each pair of job arrival and spot availability process, we run Algorithm \ref{alg: adaptive_adm_ctrl} from a small and large start point $r_{init}$ to understand the average-cost trajectory and the average-delay trajectory as $r$ converges to the optimal \textit{proxy-}max-queue-length $r^*$. We plot the running-average of cost per job $C(r(n))$, delay per job $d(r(n))$. For each such plot, we also provide an \textit{inset} plot tracking the evolution (and convergence) of $r(n)$ as Algorithm \ref{alg: adaptive_adm_ctrl} runs through incoming job arrivals.

\subsection{Preemptible GCP instances}
As discussed in detail in Section \ref{sec: related works}, spot instance availability data is scarce due to the vendors revealing either qualitative availability information, like AWS interruption rate $ \in \{1,2,3\}$ (see \cite{azure_spots,AWS_EC2}) or none as in the case of Google Cloud Platform (GCP) VMs \cite{GCP,azure_spots}. 

\cite{kadupitige2020modeling} characterizes the statistics of spot availability by studying the life-time of preemptible GCP VMs, with evidence found in \cite{li2020characterizing}. Empirically, they find spot instances to be available either almost immediately or closer to the $\approx24$ hour preemption deadline set by Google. The density function for inter-arrival times $f_S(\cdot)$ for such spot instances follows a \textit{bath-tub} function, which they model using the following cumulative-distribution function:
\begin{align}
    F_{S}(t) = A\left(1 - \exp{\left(-\frac{t}{\tau_1}\right)} + \exp{\left(\frac{t-b}{\tau_2}\right)}\mathbbm{1}_{t\leq \tau_2} \right)
\end{align}
where they find parameter values $b\approx 24$ hours, $\tau_1 \in [0.5,1.5]$, $\tau_2 \approx 0.8$ and $A \in [0.4,0.5]$ to work well in practice. We sample spot inter-arrival times from the following distribution and consider job arrivals as a $Poisson(1/12)$ or Gamma$(1/12,1)$ process ($\approx 12$ hours inter-arrival times in both processes).

Figure \ref{fig: gcp_low_delta} shows the performance of Algorithm \ref{alg: adaptive_adm_ctrl} as $r^* \approx 0.3$ is learned from a low or high $r_{init}$ when $\lambda = \frac{1}{12}$, $\mu \approx \frac{1}{12}$ and delay constraint $\delta = 3$ hours. Both trajectories converge to the single-slot optimal cost: $k-(k-1)\mu\delta \approx 7.75$. 

\begin{figure}
    \centering
    \includegraphics[width=0.95\linewidth]{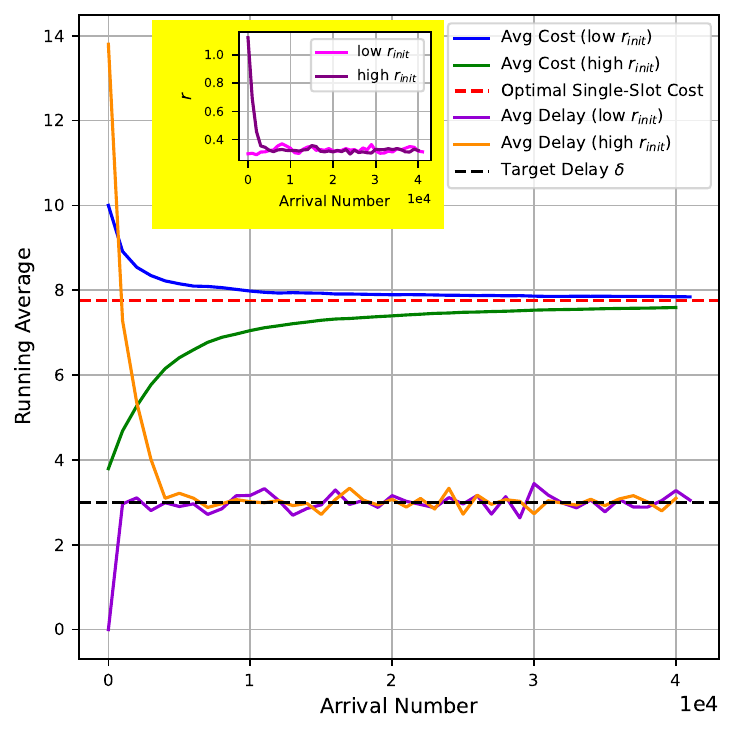}
    \caption{Learning optimal scheduling $\delta < \frac{1}{\lambda + \mu}$}
    \label{fig: gcp_low_delta}
    \vspace{-19pt}
\end{figure}

This empirical result corroborates multiple theoretical claims made in previous sections. First, it validates Theorem \ref{thm: opt_queue} that capping queue length to $1$ is optimal for small $\delta$ and that the optimal cost is given by $k-(k-1)\mu\delta$. Second, it shows that the greedy scheduling policy lined out in Theorem \ref{thm: gen_sys_greedy} is optimal in minimizing the average-cost per job in the strong delay constraint regime, as seen by cost-curves converging to the single-slot optimal cost. Lastly, the learning-plus-scheduling technique of Algorithm \ref{alg: adaptive_adm_ctrl} successfully implements this greedy policy in an \textit{online} fashion to learn the optimal $r^*$. 

Fig. \ref{fig: gcp_high_delta} illustrates Algorithm \ref{alg: adaptive_adm_ctrl} in the relaxed delay constraint regime with $\delta = 18$ hours. Although the theoretical optimal cost cannot be analytically computed, as noted in Section \ref{subsec: heurstic-scheduling}, both cost-curves still converge to a common value while the algorithm is learning to satisfy the delay constraint.

\subsection{$M/M/1/N$ system}
We now test the optimality of Theorem \ref{thm: gen_sys_greedy}'s scheduling policy and Algorithm \ref{alg: adaptive_adm_ctrl}'s learning technique for both small and large $\delta$ values. To do that, we need the optimal average job cost in an analytical form, which can be formulated only for an $M/M/1$ system. Despite the variability in real-world behavior, the exponential distribution has been widely used in modeling preemptions for systems such as AWS EC2 spot instances \cite{sharma2016flint,sharma2016novel,zheng2015bid}.
To that end, we consider a $Poisson(1/12)$ job-arrival process and $Poisson(1/24)$ spot-instance arrival process with $\delta=3$ hours and $\delta=27$ hours.

\begin{figure}
    \centering
    \includegraphics[width=0.95\linewidth]{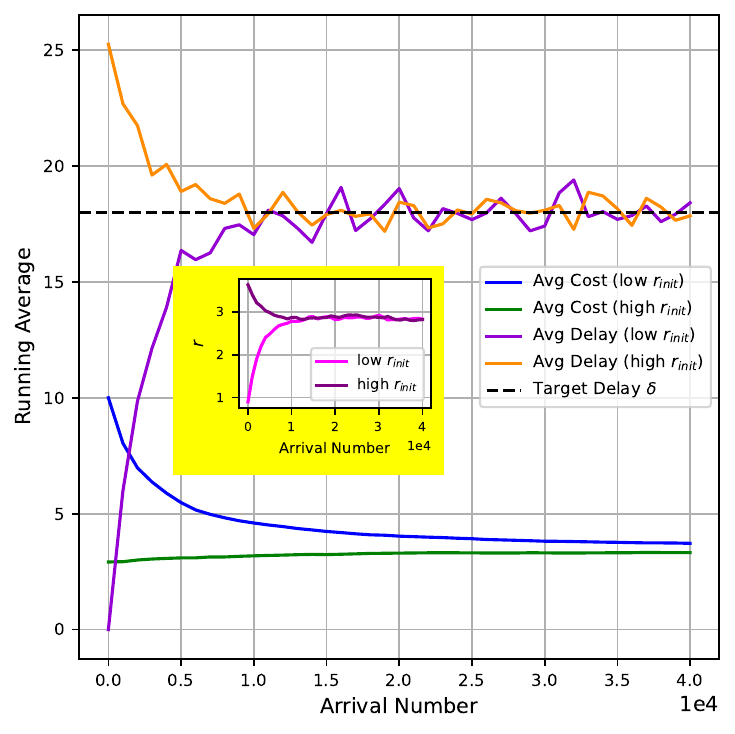}
    \caption{Learning optimal scheduling for $\lambda \delta > 1$}
    \label{fig: gcp_high_delta}
    \vspace{-15pt}
\end{figure}

Figure \ref{fig: mm1 low delta} validates the optimality of our scheduling policy and $r^*$-learning algorithm in the strong delay constraint regime where the cost-curves converge to $\E[C^*] = k-(k-1)\mu\delta = 8.875$ when the delay curves converge to $\delta = 3$. 

For the large $\delta$ case, we first need to formulate the optimal cost for an $M/M/1/N$ queue for spot-instance processing, as presented in the theorem below. What the this guarantees is that when one has average delay constraint $\delta = \delta_N$, the optimal policy is to have allow a queue-length of at most $N$. 

\begin{figure}
    \centering
    \includegraphics[width=0.95\linewidth]{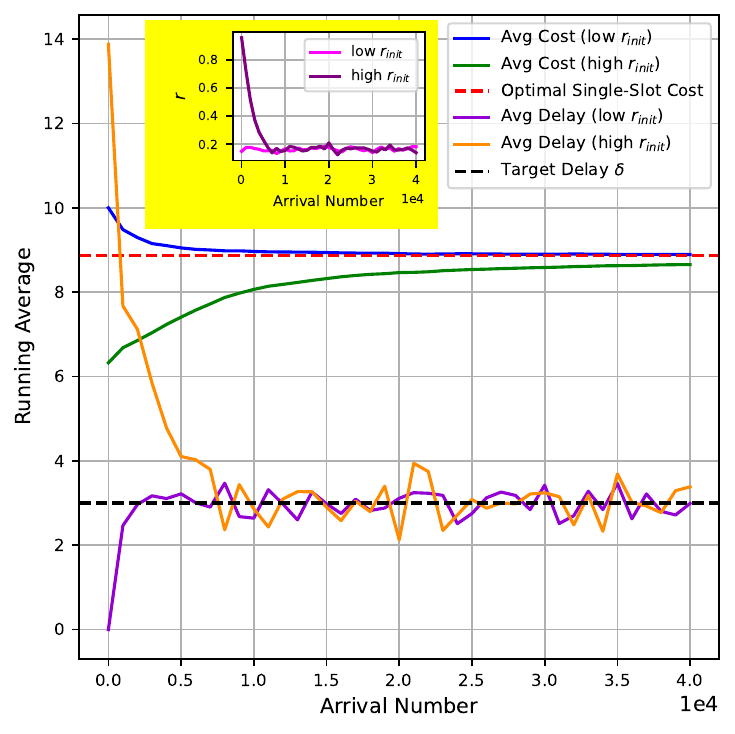}
    \caption{Learning optimal scheduling $\delta < \frac{1}{\lambda + \mu}$ for memoryless system}
    \label{fig: mm1 low delta}
    \vspace{-10pt}
\end{figure}

\begin{figure}
    \centering
    \includegraphics[width=0.95\linewidth]{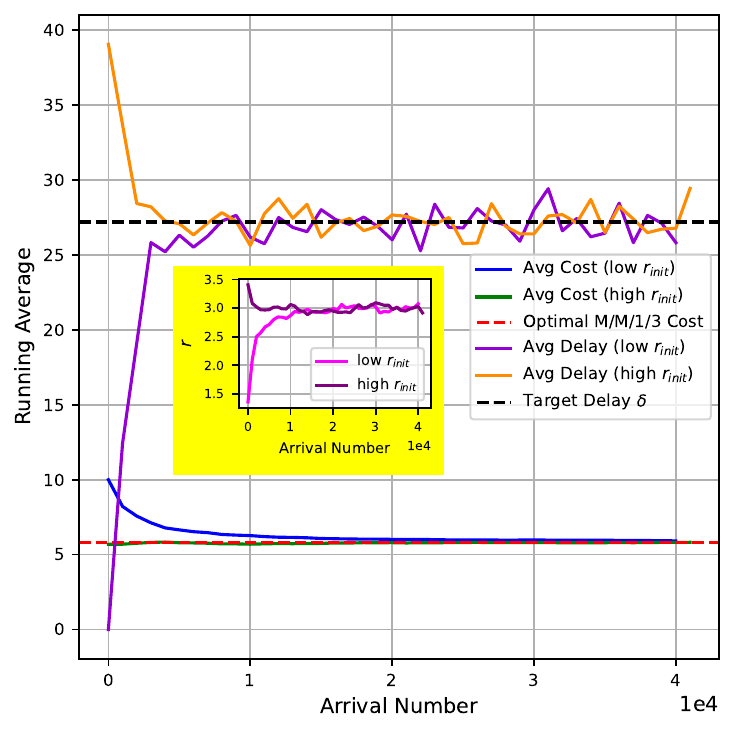}
    \caption{Learning optimal scheduling $\lambda\delta>1$ for memoryless system}
    \label{fig: mm1 large delta}
    \vspace{-18pt}
\end{figure}

\begin{theorem}\label{thm: mm1-m}
    Consider the job arrival process to be $Poisson(\lambda)$, the spot inter-arrival times to be IID $Exp(\mu)$ and the maximal wait-time distribution to be $Exp(\alpha)$, $\alpha > 0$. The minimum possible cost if the allowed queue length is at most $N$, under any joining policy $\{q_n\}_{n=0}^N$, is 
    \vspace{-5pt}
    \begin{align*}
        \E[C_N] &= k - (k-1)\left(\frac{\mu}{\lambda} \right) \left(1-\frac{\frac{\lambda}{\mu}-1}{\frac{\lambda^{N+1}}{\mu^{N+1}}-1} \right)
    \end{align*}
    which is a strictly decreasing function in $N$ under light traffic ($\lambda < \mu$) or overload ($\lambda > \mu$). The delay $\delta_N$ that allows this minimum cost to be achieved must satisfy,
    \vspace{-5pt}
    \begin{align*}
        \delta_N \geq \frac{1}{\lambda}\sum_{n=1}^N \frac{n \left(\frac{\lambda}{\mu}\right)^n}{1+\sum_{n=1}^N \left(\frac{\lambda}{\mu}\right)^n}
    \end{align*}
    which is a strictly increasing function in $N$.
\end{theorem}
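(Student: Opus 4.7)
My plan is to reduce the optimization to a classical $M/M/1/N$ computation and invoke Theorem~\ref{thm: gen_cost}, which gives $\E[C] = k - (k-1)(\mu/\lambda)(1-\pi_0)$. Minimizing cost is then equivalent to minimizing $\pi_0$, the steady-state probability that the queue is empty. Under the stated model the queue-length process is a birth--death chain with birth rate $\lambda q_n$ out of state $n < N$ and death rate $\mu + n\alpha$ out of state $n \ge 1$, so by detailed balance $\pi_n/\pi_0 = \prod_{k=1}^{n} \lambda q_{k-1}/(\mu + k\alpha)$. Each such ratio is nondecreasing in every $q_{k-1}$ and nonincreasing in $\alpha$, so $\pi_0 = \bigl(\sum_{n=0}^{N} \pi_n/\pi_0\bigr)^{-1}$ is minimized (approached as an infimum in $\alpha \downarrow 0$) by $q_n = 1$ for $n < N$, $q_N = 0$, and vanishing reneging. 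This collapses the chain to the classical $M/M/1/N$ queue with $\rho := \lambda/\mu$.

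For that chain, $\pi_n = \rho^n / \sum_{k=0}^{N} \rho^k$, so $\pi_0 = (\rho - 1)/(\rho^{N+1} - 1)$ when $\rho \ne 1$ (and $1/(N+1)$ when $\rho = 1$). Plugging into Theorem~\ref{thm: gen_cost} yields the displayed $\E[C_N]$. I would argue strict monotonicity in $N$ through $1/\pi_0$: for $\rho > 1$ the quantity $(\rho^{N+1}-1)/(\rho-1)$ strictly increases in $N$, while for $\rho < 1$ one rewrites it as $(1-\rho^{N+1})/(1-\rho)$ and uses $\rho^{N+1} \downarrow 0$. Either way $(1-\pi_0)$ strictly increases, hence $\E[C_N]$ strictly decreases.

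For the delay, I would apply Little's law to the full arrival stream (jobs routed directly to on-demand contribute zero queue time), giving $\E[T] = \E[N_q]/\lambda = (1/\lambda) \sum_{n=1}^N n \pi_n$, which reproduces the stated $\delta_N$ once the $M/M/1/N$ $\pi_n$ is substituted. Strict monotonicity of this lower bound in $N$ follows by setting $T_N = \sum_{n=0}^N n\rho^n$ and $S_N = \sum_{n=0}^N \rho^n$ and computing
\begin{align*}
T_{N+1} S_N - T_N S_{N+1} = \rho^{N+1} \sum_{n=0}^{N} (N+1-n)\rho^n > 0,
\end{align*}
which yields $T_{N+1}/S_{N+1} > T_N/S_N$.

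The main obstacle will be the first step: arguing rigorously that $\pi_0$ is (strictly) minimized by full admission and vanishing reneging. The pointwise-shrinkage argument on $\pi_n/\pi_0$ is clean, but strict monotonicity needs care around degenerate admission policies that trivially empty the chain and around the boundary case $\rho = 1$. A cleaner fallback is a sample-path coupling of two birth--death chains (stricter admission or positive $\alpha$ versus full admission with $\alpha = 0$) driven by a common Poisson clock, which yields pathwise dominance of the busy time and hence the desired strict inequality for $(1-\pi_0)$.
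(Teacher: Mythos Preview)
The paper states Theorem~\ref{thm: mm1-m} without proof (it appears in the Experiments section purely as an analytical benchmark for the $M/M/1/N$ simulations), so there is no paper argument to compare against. Your proof is correct and would fill that gap cleanly: the reduction via Theorem~\ref{thm: gen_cost} to minimizing $\pi_0$, the birth--death computation with rates $(\lambda q_n,\ \mu+n\alpha)$, the detailed-balance monotonicity in $(q_n,\alpha)$, and the algebra for $\E[C_N]$ and $\delta_N$ are all sound, as is the cross-product computation $T_{N+1}S_N-T_N S_{N+1}=\rho^{N+1}\sum_{n=0}^N(N+1-n)\rho^n>0$.

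Two small points worth tightening. First, since the hypothesis fixes $\alpha>0$, the displayed $\E[C_N]$ is strictly an infimum approached as $\alpha\downarrow 0$ rather than an attained minimum; you already flag this, and it is the theorem's phrasing that is slightly loose, not your argument. Second, your case split on $\rho\lessgtr 1$ for the cost monotonicity is unnecessary: writing $1/\pi_0=\sum_{k=0}^{N}\rho^k$ gives strict increase in $N$ for every $\rho>0$ (including $\rho=1$, which the theorem formally excludes but which your argument then covers for free). The sample-path coupling you propose as a fallback is overkill here; the ratio argument on $\pi_n/\pi_0$ is already rigorous for birth--death chains.
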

For $N=3$, $\delta_N\approx 27$, meaning $\lambda\delta>1$ and that we are in the relaxed delay constraint regime. However, Algorithm \ref{alg: adaptive_adm_ctrl} does not assume knowledge of $\lambda$ or $\mu$ before-hand to analytically compute this $N$. 

Figure \ref{fig: mm1 large delta} confirms our claim that Algorithm \ref{alg: adaptive_adm_ctrl} can learn the correct $r^* \approx N = 3$ under the scheduling policy set by Theorem \ref{thm: gen_sys_greedy} under higher target delay $\delta$. We observe optimality with respect to all three quantities. First $r$ converges to the optimal $r^* = N=3$. Second, the average cost per job converges to the optimal cost given by Theorem \ref{thm: mm1-m} for $N=3$ and lastly, the delay constraint is satisfied.

\section{Related Work}\label{sec: related works}
AWS introduced spot instances in 2009, employing a bidding system to utilize unused cloud capacity \cite{AWS_EC2}. As a result, significant research has been dedicated to understanding and analyzing the pricing strategies of these service providers \cite{agmon2013deconstructing,javadi2011statistical,wang2013present,song2012optimal,singh2015dynamic,tang2012towards}. Spot price based bidding is loosely related to the concept of scheduling with earlier works making progress \cite{menache2014demand,varshney2018autobot,poola2014fault,zafer2012optimal,song2012optimal}. However, in the presence of stable spot pricing, like AWS EC2 \cite{AWS_EC2_prices}, (Google Cloud Platform) GCP's fixed 30-day spot price \cite{GCP}, Oracle Cloud's consistent 50\% discount for preemptible instances \cite{oracle_spots}, and Azure's stable regional pricing \cite{azure_spots}, this the bidding-approach to scheduling is irrelevant. 
Recent works \cite{azure_spots,kadupitige2020modeling,10.1145/3589334.3645548} have started to focus on statistical analysis of spot-instance availability in AWS and GCP. Scheduling on spot/on-demand instances has recently garnered attention in this community with algorithms making use of ML/deep learning models to predict future spot-availability \cite{harlap2018tributary,yang2022spot,yang2023snape}, and focusing on workflow scheduling \cite{taghavi2023cost,9893047}. However, most of these works do not consider a job stream \cite{295489} or deadline constraints \cite{9640599}, with the exception of \cite{11038902} proposing a priority-scheduling algorithm for heterogeneous job arrivals, albeit for a simpler Poisson arrival model.

\section{Final Thoughts}
This work establishes a solid analytical foundation for cost-effective scheduling of delay-sensitive jobs across cloud instance types, offering optimal strategies under diverse delay constraints. The proposed adaptive algorithm demonstrates near-optimal performance, opening avenues for future research. Promising directions include extending the model to multi-tier cloud architectures, incorporating job priorities or deadlines, and exploring real-time learning-based scheduling in dynamic cloud environments.
\bibliographystyle{IEEEtran}
\bibliography{refs}
\end{document}